\let\oldvec\vec
\let\vec\oldvec
\newcommand {\ignore} [1] {}
\newcommand{\CC}{{\cal C}}
\newcommand{\FF}{{\cal F}}
\newcommand{\HH}{{\cal H}}
\newcommand{\RR}{{\cal R}}
\newcommand{\empt}{\emptyset}
\newcommand{\sem}{\setminus}
\newcommand{\subs}{\subseteq}
\newcommand{\f}{\frac}
\begin{document}

\title{On rooted $k$-connectivity problems in quasi-bipartite digraphs\thanks{Preliminary version in CSR 2021: 339-348.}}

\author{Zeev Nutov}

\institute{Z. Nutov \at The Open University of Israel \\
              \email{nutov@openu.ac.il}}

\date{Received: date / Accepted: date}

\thispagestyle{empty}






\clearpage
\pagenumbering{arabic} 

\maketitle


\begin{abstract}
We consider the directed {\sc Min-Cost Rooted Subset $k$-Edge-Connection} problem: 
given a digraph $G=(V,E)$ with edge costs, a set $T \subs V$ of terminals, a root node $r$,
and an integer $k$, find a min-cost subgraph of $G$ that contains $k$ edge disjoint $rt$-paths for all $t \in T$.  
The case when every edge of positive cost has head in $T$ admits a polynomial time algorithm due to Frank \cite{F-R},
and the case when all positive cost edges are incident to $r$ is equivalent to the {\sc $k$-Multicover} problem. 
Chan et al. \cite{CLWZ} gave an LP-based $O(\ln k \ln |T|)$-approximation algorithm for quasi-bipartite instances, 
when every edge in $G$ has an end (tail or head) in $T \cup \{r\}$.
We give a simple combinatorial algorithm with the same ratio for a more general problem 
of covering an arbitrary $T$-intersecting supermodular set function 
by a minimum cost edge set, and for the case when only every positive cost edge has an end in $T \cup \{r\}$.
\end{abstract}

\keywords{
min-cost rooted $k$-edge-connection \and 
quasi-bipartite digraphs \and 
$T$-intersecting supermodular set functions \and 
approximation algorithms}

\section{Introduction} \label{s:introduction}

All graphs considered here are directed, unless stated otherwise.
We consider the following problem (a.k.a. {\sc $k$-Edge-Connected Directed Steiner Tree}):

\begin{center} \fbox{\begin{minipage}{0.97\textwidth}
\underline{{\sc Min-Cost Rooted Subset $k$-Edge-Connection}} \\
Input:  \ \ A directed (multi-)graph $G=(V,E)$ with edge costs $\{c(e):e \in E\}$, 
a set $T \subset V$ of terminals, a root node $r \in V \sem T$, and an integer $k$. \\
Output: A min-cost subgraph that has $k$ edge disjoint $rt$-paths for all $t \in T$. 
\end{minipage}} \end{center}

The case when every edge of positive cost has head in $T$ admits a polynomial time algorithm due to Frank \cite{F-R}.
When all positive cost edges are incident to $r$ we get the {\sc Min-Cost Multicover} problem. 
The case when all positive cost edges are incident to the same node admits approximation ratio $O(\ln n)$ \cite{KNs}.
More generally, a graph (or an edge set) 
is called {\bf quasi-bipartite} if every edge has at least one end (tail or head) in $T \cup \{r\}$. 

In the augmentation version of the problem -- {\sc Min-Cost Rooted Subset $(k_0,k)$-Edge-Connection Augmentation},
the input graph $G$ contains a subgraph $G_0=(V,E_0)$ of cost zero 
that has $k_0$-edge disjoint $rt$-paths for all $t \in T$.
Recently, Chan, Laekhanukit, Wei, \& Zhang \cite{CLWZ} obtained approximation ratio $O(\ln (k -k_0+1) \ln |T|)$ for the case
when $G$ is quasi-bipartite. We provide a simple proof for a more general setting.

An integer valued set function $f$ on a groundset $V$ is {\bf intersecting supermodular} 
if any $A,B \subs V$ that intersect satisfy the {\bf supermodular inequali\-ty}
$f(A)+f(B) \leq f(A \cap B)+f(A \cup B)$; if this holds whenever $A \cap B \cap T \neq \empt$ 
for a given set $T \subs V$ of terminals, then $f$ is {\bf $T$-intersecting supermodular}. 
We say that $A \subs V$ is an {\bf $f$-positive set} if $f(A)>0$.
$f$ is {\bf positively $T$-intersecting supermodular} if the supermodular inequality holds 
whenever $A \cap B \cap T \neq \empt$ and $f(A),f(B)>0$.
A typical way to create a positively intersecting supermodular function is to take the ``non-negative part'' of an intersecting
supermodular one, which means replacing each negative value by zero; namely, if $g$ is $T$-intersecting supermodular 
then $f(A)=\max\{g(A),0\}$ is positively $T$ intersecting supermodular, see \cite{F-R}.  

An edge $e$ {\bf covers} a set $A$ if it enters $A$, namely, if its head is in $A$ and tail is not in $A$.
For an edge set/graph $J$ let $d_J(A)$ denote the number of edges in $J$ that cover $A$.
We say that {\bf $J$ covers $f$} or that $J$ is a {\bf cover of $f$} if $d_J(A) \geq f(A)$ for all $A \subs V$.
We consider the following generic problem.

\begin{center} \fbox{\begin{minipage}{0.97\textwidth}
\underline{{\sc Min-Cost Set Function Edge Cover}} \\
{\em Input:}  \ \ A digraph $G=(V,E)$ with edge costs and a set function $f$ on $V$. \\
{\em Output:} A min-cost edge subset $J \subs E$ that covers $f$. 
\end{minipage}} \end{center}

Here $f$ may not be given explicitly, and for a polynomial time implementation of algorithms 
we need that certain queries related to $f$ can be answered in polynomial time. 
For an edge set $I$, the {\bf residual function $f^I$ of $f$} is defined by $f^I(A)=\max\{f(A)-d_I(A),0\}$. 
It is known that if $f$ is positively $T$-intersecting supermodular then so is $f^I$, c.f. \cite{F-R};
to see this, note that $g(A)=f(A)-d_I(A)$ is positively $T$ -intersecting supermodular 
(since $g(A) > 0$ implies $f(A)>0$ and since $-d(A)$ is supermodular), 
and thus the positive part $\max\{g(A),0\}$ of $g$ is also positively $T$-intersecting supermodular. 

Let $\max(f)=\max\{f(A):A \subs V\}$ denote the maximum $f$-value taken over all sets.
An inclusion minimal member of a set-family $\FF$ is called an {\bf $\FF$-core}, or simply a {\bf core}, 
if $\FF$ is clear from the context. Let $\CC_\FF$ denote the family of $\FF$-cores.
We will assume the following. 

\medskip 

\noindent
{\bf Assumption 1.}
{\em The cores of the set family $\FF=\{A: f^I(A)=\max(f^I)\}$
can be found in polynomial time for any edge set $I$.}

\medskip 

Given a set function $f$ on $V$ and a set $T \subs V$ of terminals, 
we say that a graph $G=(V,E)$ is {\bf $f$-quasi-bipartite} if every its edge  
has an end (tail or head) $v$ such that $v \in T$ or such that $v$ does not belong to any $f$-positive~set.
Let $E_0$ be the set of zero cost edges of $G$.
By Menger's Theorem, {\sc Min-Cost Rooted Subset $k$-Edge-Connection Augmentation} is equivalent
to the problem of finding a min-cost edge set $J \subs E \sem E_0$ that covers the function $f$ defined by 
\begin{equation*}
f(A)= \left\{ 
\begin{array}{ll} 
\max\{k-d_{G_0}(A),0\} \ & \mbox{ if } A \cap T \neq \empt, r \notin A \\ 
0                                        & \mbox{ otherwise }        
\end{array} 
\right .
\end{equation*}
This $f$ is positively $T$-intersecting supermodular, see \cite{F-R}.
Since $r$ does not belong to any $f$-positive set, 
if $G$ is quasi-bipartite then $G \sem E_0$ is $f$-quasi-bipartite. 
Assumption~1 holds for this $f$, since the cores as in Assumption~1 can be found
by computing for every $t \in T$ the closest to $t$ minimum $rt$-cut of $G_0+I$, c.f. \cite{F-R,N-p}.
Under Assumption~1, we prove the following. 

\begin{theorem} \label{t:1} 
The {\sc Min-Cost Set Function Edge Cover} problem with posi\-tively $T$-intersecting supermodular $f$ 
and $f$-quasi-bipartite $G$ admits approximation ratio $4H(\max(f)) \cdot (1+\ln |T|)$,
where $H(k)=\sum_{i=1}^k 1/i$ denotes the $k$th Harmonic number. 
\end{theorem}

Theorem~\ref{t:1} implies the following extension of the result of Chan et al. \cite{CLWZ}.

\begin{corollary} \label{c:r}
The {\sc Min-Cost Rooted Subset $(k_0,k)$-Edge-Connection Augmentation} problem 
admits approximation ratio $4H(k-k_0) \cdot (1+\ln|T|)$ if the set of positive cost edges of $G$ 
is quasi-bipartite.
\end{corollary}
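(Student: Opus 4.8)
The plan is to prove Theorem~\ref{t:1} via a standard two-phase approach adapted to the quasi-bipartite structure, combining a ``density'' argument for the high-deficiency part with a direct reduction to Frank's polynomial algorithm for the low-deficiency part. First I would observe that since $G$ is quasi-bipartite, every positive-cost edge not incident to the root has its tail in $T$ (its head may be anywhere), so in particular the ``expensive'' structure of any feasible solution is controlled by the terminals. The overall algorithm runs in $k_f$ iterations: in iteration $i$, with current residual function $f^I$ satisfying $k_{f^I}=k_f-i+1$, we buy an edge set $J_i$ that reduces the maximum deficiency by one, i.e.\ makes $k_{f^{I+J_i}}\le k_f-i$. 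Summing the costs and using $H(k_f)=\sum_{i=1}^{k_f}1/i$ will give the claimed bound, provided each $J_i$ costs at most $2(1+\ln|T|)$ times the optimum.

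The heart of the argument is thus the single-iteration statement: if $f$ is positively $T$-intersecting supermodular, $G$ is quasi-bipartite, and $\mathrm{opt}$ denotes the cost of an optimal $f$-cover, then one can find in polynomial time an edge set $J$ with $k_{f^J}\le k_f-1$ of cost at most $2(1+\ln|T|)\cdot\mathrm{opt}$. To get this, I would work with the family $\FF=\{A:f(A)=k_f\}$ and its cores $\CC_\FF$, which are pairwise disjoint or at least form a laminar-like structure because of positive $T$-intersecting supermodularity (two cores whose intersection meets $T$ would contradict minimality, and one shows the cores meeting $T$ are disjoint). Then I would run a greedy set-cover-style procedure: repeatedly pick the edge set of minimum density (cost divided by the number of not-yet-``deactivated'' cores it covers), where covering a core means entering it. The quasi-bipartiteness enters crucially here: because every positive-cost edge has its tail in $T$ or is incident to $r$, one can bound the density of the ``best move'' against $\mathrm{opt}$ by charging to the terminals — essentially an optimal solution, restricted to edges entering the cores, decomposes so that some terminal is responsible for covering a $1/|T|$-fraction of the remaining cores at proportional cost. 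The factor $2$ comes from handling separately the cores that are ``singletons reachable cheaply'' versus the rest, or from the two types of edges (root-incident vs.\ $T$-tailed); the $\ln|T|$ is the usual greedy/set-cover overhead since there are at most $|T|$ disjoint cores meeting $T$.

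The step I expect to be the main obstacle is establishing the density bound, i.e.\ showing that at every stage there is a polynomially-findable edge set whose cost-to-coverage ratio is at most $2\,\mathrm{opt}/(\text{number of remaining cores})$. This requires two sub-ingredients: (i) an uncrossing argument showing the relevant cores form an antichain of size at most $|T|$ so that a set-cover analysis applies, which follows from positive $T$-intersecting supermodularity by the standard argument that if two $f$-positive cores $C_1,C_2$ have $C_1\cap C_2\cap T\neq\emptyset$ then $f(C_1\cap C_2)=k_f$ contradicts minimality; and (ii) a Frank-type LP or combinatorial argument producing, for each terminal $t$, a minimum-cost edge set entering all cores ``owned by'' $t$ — this is exactly the tractable case (all positive-cost edges with head in the terminal set, after reversing edges / contracting, or directly the Edmonds-style arborescence packing of \cite{F-R}), so Assumption~1 together with Frank's theorem lets us compute it. Once (i) and (ii) are in place, averaging over the $\le|T|$ terminals gives an edge set of density at most $\mathrm{opt}/(\text{remaining cores})$ up to the factor $2$, and the greedy harmonic bound finishes the iteration; iterating over $i=1,\dots,k_f$ yields the factor $2H(k_f)(1+\ln|T|)$. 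Corollary~\ref{c:r} is then immediate by plugging in $f(A)=\max\{k-d_{G_0}(A),0\}$ for $A\cap T\neq\emptyset$, which is positively $T$-intersecting supermodular with $k_f=k-\ell$, and noting that ``$G$ quasi-bipartite'' for this $f$ is precisely the hypothesis that the positive-cost edges have an end in $T+r$.
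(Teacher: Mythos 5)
Your overall outline (backward augmentation over deficiency levels, a greedy density argument over the cores of the maximum-deficiency family, then plugging in $f(A)=\max\{k-d_{G_0}(A),0\}$ with $k_f=k-\ell$ to get the corollary) matches the paper, but the quantitative heart of the argument is missing, and as stated your plan does not give the claimed ratio. You propose that each iteration $i$ buys a set $J_i$ of cost at most $2(1+\ln|T|)\cdot\mathrm{opt}$, and that the density bound to aim for is $2\,\mathrm{opt}/(\#\text{remaining cores})$. Summing such per-iteration costs over the $k_f$ levels gives $2k_f(1+\ln|T|)\cdot\mathrm{opt}$, not $2H(k_f)(1+\ln|T|)\cdot\mathrm{opt}$; the harmonic sum requires that the iteration at deficiency level $\ell$ cost only $\f{2}{\ell}(1+\ln|T|)\cdot\mathrm{opt}$, i.e. the density bound must carry an extra factor $\f{1}{\ell}$. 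That factor is exactly what the paper's Lemmas~\ref{l:sc}--\ref{l:k-cov} supply: the optimal solution is a $k_f$-\emph{cover} of $\FF=\{A:f(A)=k_f\}$, a minimal $k_f$-cover of each ring $\FF(C)$ decomposes into $k_f$ edge-disjoint covers (Lemma~\ref{l:R1}(ii), Frank's kernel systems), and the $k$-multicover LP averaging then produces a spider-cover of density at most $\f{1}{k_f}\cdot\f{\tau_f}{|\CC_\FF|}$. Quasi-bipartiteness is used precisely to make this decomposition charge-free across cores (Corollary~\ref{c:star}: minimal covers of distinct rings share at most their head edge), not merely to bound the number of cores by $|T|$. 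None of this appears in your sketch, and without it the "averaging over the $\le|T|$ terminals" step has no mechanism to gain the $1/\ell$ factor.

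Two further points where the sketch would need repair. First, entering a core does not "deactivate" it: to make progress one must cover the whole ring $\FF(C)$ (all sets of $\FF$ containing $C$ and no other core), since otherwise the core merely grows; the correct progress measure is the number of residual cores, and the factor $2$ in the ratio comes from the counting argument that covering $\FF(\CC)$ decreases this number by at least $|\CC|/2$ (cores can merge), not from "root-incident vs.\ $T$-tailed edges" or cheap singleton cores as you guess. Second, your reading of quasi-bipartiteness ("every positive-cost edge not incident to the root has its tail in $T$") is inaccurate -- the hypothesis is only that each such edge has \emph{some} end in $T+r$ -- and the terminal-centered decomposition you describe (each terminal "owns" cores and covers them via Frank's algorithm) is not how the spiders are built; in the paper spiders are centered at a head \emph{edge}, and Frank-type/primal-dual machinery is invoked only to compute a min-cost cover of a single ring when building the auxiliary star instance. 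So the corollary itself is indeed immediate once Theorem~\ref{t:1} is available, but your route to Theorem~\ref{t:1} has a genuine gap at the density bound.
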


As far as we can see, Corollary~\ref{c:r} cannot be deduced from the work of Chan et al. \cite{CLWZ}.
Our approach is motivated by an earlier result of Frank \cite{F-R}, 
who showed that {\sc Min-Cost Rooted Subset $k$-Edge-Connection}  
can be solved in polynomial time provided that every positive cost edge has head in $T$. 
For this, he proved that {\sc Min-Cost Set Function Edge Cover} with positively $T$-intersecting supermodular $f$ 
can be solved in polynomial time provided that every positive cost edge has head in $T$. 
While our approximation ratio is asymptotically similar to the one of \cite{CLWZ} -- $O(\ln k \cdot \ln |T|)$,
our constant hidden in the $O(\cdot)$ term is smaller and the proof (of a more general result) is substantially simpler.
Moreover, our algorithm is combinatorial and thus is much faster than the one of \cite{CLWZ},
that repeatedly solves linear programs and rounds LP solutions. 
Chan et al. \cite{CLWZ} do not specify how the LPs are solved, but one can easily see
that they can be solved using the ellipsoid algorithm.  

We use a method initiated by the author in \cite{N-p}, 
that extends the Klein-Ravi \cite{KR} algorithm for the {\sc Node Weighted Steiner Tree} problem, 
to high connectivity problems.
It was applied later in \cite{N-nw,N-FOCS09} also for 
node weighted problems, and the same method is used in \cite{CLWZ}; 
a restricted version of this method appeared earlier in \cite{KN-cr} and later in \cite{FL}.
The method was further developed by Fukunaga \cite{Fuk} and 
Chekuri, Ene, and Vakilian \cite{CEV} for prize-collecting connectivity problems. 

In the rest of this section we briefly survey some literature on rooted connectivity problems.
The {\sc Directed Steiner Tree} problem admits approximation ratio $O(\ell^3|T|^{2/\ell})$ in time $O(|T|^{2\ell} n^\ell)$
for any integer $\ell$, see \cite{Z,CCCD,KP,HRZ},
and also a tight quasi-polynomial time approximation $O(\log^2 |T|/\log \log |T|)$ \cite{GLS,GN};
see also a survey in \cite{E-rg}. For similar results for {\sc Min-Cost Rooted Subset $2$-Edge-Connection} see \cite{GL}.
{\sc Directed Steiner Tree} is $\Omega(\log^2 n)$-hard to approximate even on very special instances \cite{HK} that arise
from the {\sc Group Steiner Tree} problem on trees; the latter problem admits 
a tight approximation ratio $O(\log^2 n)$ \cite{GKR}.
The (undirected) {\sc Steiner Tree} problem was also studied extensively, c.f. \cite{BGRS,GORZ} and the references therein.
The study of quasi-bipartite instances was initiated for undirected graphs in the 90's \cite{RV},  
while the directed version was shown to admits approximation ratio $O(\ln |T|)$ in \cite{FKS,HF}.

Rooted $k$-connectivity problems were studied
for both directed and undirected graphs,
edge-connectivity and node-connectivity, 
and various types of graphs and costs; c.f. a survey \cite{N-sn}.
For undirected graphs the problem admits approximation ratio $2$ \cite{Jain}, 
but for digraphs it has approximation threshold $\max\{k^{1/2-\epsilon},|T|^{1/4-\epsilon}\}$ \cite{L-hard}.
For the undirected node connectivity version, the currently best known approximation ratio is $O(k \ln k)$ \cite{N-FOCS09}
and threshold $\max\{k^{0.1-\epsilon},|T|^{1/4-\epsilon}\}$ \cite{L-hard}.
However, the augmentation version when any edge can be added by a cost of $1$
is just {\sc Set Cover} hard and admits approximation ratios $O(\ln |T|)$ for digraphs 
and $\min\{O(\ln |T|, O(\ln^2 k)\}$ for graphs \cite{KN-aug};
a similar result holds when positive cost edges form a star \cite{KNs}.

In digraphs, node connectivity can be reduced to edge-connectivity by a folklore reduction
of ``splitting'' each node $v$ into two nodes $v^{\sf in},v^{\sf out}$. 
However, this reduction does not preserve quasi-bipartiteness. 
The reductions of \cite{LN} that transfers undirected connectivity problems into directed ones,
and a reduction of \cite{CLNV} that reduces general connectivity requirements to rooted requirements,
also do not preserve quasi-bipartiteness.

\section{Covering {\em T}-intersecting supermodular functions (Theorem~\ref{t:1})} \label{s:t1}

A set family $\FF$ is a {\bf $T$-intersecting family} if 
$A \cap B,A \cup B \in \FF$ whenever $A \cap B \cap T \neq \empt$.
It is known that if $f$ is (positively) $T$-intersecting supermodular then 
the family $\FF=\{A \subseteq V:f(A)=\max(f)\}$ is $T$-intersecting, see \cite{F-R}.
We say that an edge set $I$ {\bf covers $\FF$} if $d_I(A) \geq 1$ for all $A \in \FF$.
Recall that inclusion minimal members of $\FF$ are called $\FF$-cores, 
and that $\CC_\FF$ denotes the family of $\FF$-cores. 
For $C \in \CC_\FF$ let $\FF(C)$ denote the family of sets in $\FF$ that contain no core distinct from $C$;
for $\CC\subs \CC_\FF$ let $\FF(\CC)=\cup_{C \in \CC} \FF(C)$. 

An analogue of the following lemma was proved in \cite[Lemma~3.3]{N-p} for intersecting families, 
and the proof for $T$-intersecting families is similar.

\begin{lemma} \label{l:d}
Let $\FF$ be a $T$-intersecting family.
If an edge set $S$ covers $\FF(\CC)$ for $\CC \subs \CC_\FF$ then $\nu(\empt)-\nu(S) \geq |\CC|/2$,
where $\nu(S)$ denotes the number of  cores of the residual family $\FF^S=\{A \in \FF:d_S(A)=0\}$.
\end{lemma}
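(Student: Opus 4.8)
The plan is to reduce the statement to a counting argument about how the cores of $\FF$ relate to the cores of the residual family $\FF^S$, using the ring structure guaranteed by Lemma~\ref{l:R2}. For each $C \in \CC$ the edge set $S$ covers $\FF(C)$, hence $C$ itself is covered by $S$ (since $C \in \FF(C)$), so $C$ is \emph{not} a member of $\FF^S$. Thus no core in $\CC$ survives in the residual family, and the $\FF^S$-cores that lie ``below'' the region controlled by $\CC$ must each contain some member of $\CC_\FF \sem \CC$ or must be a genuinely new core created by the covering. The key point to make precise is that the covering of $\FF(\CC)$ cannot create many new cores: each new $\FF^S$-core must sit above at least one old core, and — this is where the factor $2$ enters — a single old core $C \in \CC_\FF$ can be ``responsible'' for at most two $\FF^S$-cores.

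Concretely, I would argue as follows. Let $C' \in \CC_{\FF^S}$ be any residual core. Since $\FF^S \subs \FF$, the set $C'$ belongs to $\FF$, so $C'$ contains at least one $\FF$-core; pick one and call it $C(C') \in \CC_\FF$. First observe that $C(C') \notin \CC$: if $C(C') \in \CC$ then $C' \supseteq C(C')$ would contain a member of $\FF(\CC)$... more carefully, using Lemma~\ref{l:R2}, $C'$ lies in the ring $\FF(C(C'))$ only if it contains no other core, and since $S$ covers all of $\FF(C(C'))$ when $C(C') \in \CC$, no such $C'$ can be in $\FF^S$; so $C(C') \in \CC_\FF \sem \CC$. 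This already gives that the residual cores inject, up to the counting defect, into $\CC_\FF \sem \CC$. To control the defect, I would show that the map $C' \mapsto C(C')$ is at most $2$-to-$1$: if $C_1', C_2' \in \CC_{\FF^S}$ both map to the same $D \in \CC_\FF \sem \CC$, then $C_1', C_2'$ both lie in the ring $\FF(D)$ (they contain $D$ and, being residual cores, contain no other $\FF$-core, hence in particular no core of $\CC$ strictly inside them — one needs here that $D$ is the only core they contain, which follows from minimality of $C_i'$ in $\FF^S$ together with $\FF^S \subs \FF$). Two members of a ring $\FF(D)$ that are both inclusion-minimal in $\FF^S$ are nested or equal as elements of a nested chain along the unique maximal set $M_D$ (Lemma~\ref{l:R1}(i) describes exactly such a chain), so in fact they are equal, giving a $1$-to-$1$ map. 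If the clean bound fails, the factor $2$ is the standard slack: an old core may split into at most two residual cores when an edge of $S$ enters it ``from the side.''

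Putting the pieces together: $\nu(S) = |\CC_{\FF^S}| \le 2\,|\CC_\FF \sem \CC| = 2(\nu(\empt) - |\CC|)$ — wait, that is not quite the target; rather I want the sharper accounting $|\CC_{\FF^S}| \le \nu(\empt) - |\CC| + (\text{new cores})$ with the new cores bounded by $|\CC|$, or directly the injection argument above giving $\nu(S) \le \nu(\empt) - |\CC|/... $. The precise bookkeeping to land on exactly $\nu(\empt) - \nu(S) \ge |\CC|/2$ is: each $C \in \CC$ is destroyed (it is covered, so absent from $\FF^S$), and each such destruction can be ``cancelled'' by the creation of at most... — I would mirror the accounting in \cite[Lemma~3.3]{N-p} verbatim, since the authors explicitly say the proof for $T$-intersecting families is the same. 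The main obstacle is verifying that the ring lemma (Lemma~\ref{l:R1}(i), \ref{l:R2}) still applies to $\FF^S$ and its cores, i.e.\ that $\FF^S$ is again $T$-intersecting (stated in Section~2) so that its cores have the same ring structure, and then checking that a residual core contains exactly one $\FF$-core — the potential multiplicity of old cores inside a new one is exactly what forces the $1/2$ and is the delicate point to get right.
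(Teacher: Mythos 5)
There is a genuine gap, and it sits at the heart of your argument: the claim that every residual core $C'$ satisfies $C(C') \notin \CC$ (equivalently, that no $\FF^S$-core contains a core from $\CC$) is false. From $C \in \CC$ and $C' \supseteq C$ you may conclude $C' \in \FF(C)$ -- and hence that $C'$ is covered by $S$ -- only when $C$ is the \emph{unique} $\FF$-core inside $C'$; if $C'$ contains a second $\FF$-core it lies outside $\FF(\CC)$ and may well survive in $\FF^S$. Concretely, take $T=\{t_1,t_2\}$ and $\FF=\{\{t_1\},\{t_2\},\{t_1,t_2\}\}$ (a $T$-intersecting family), $\CC=\CC_\FF=\{\{t_1\},\{t_2\}\}$, and $S=\{(t_2,t_1),(t_1,t_2)\}$: then $S$ covers $\FF(\CC)=\{\{t_1\},\{t_2\}\}$ but not $\{t_1,t_2\}$, so the unique $\FF^S$-core contains \emph{both} cores of $\CC$, and $\nu(\empt)-\nu(S)=1=|\CC|/2$. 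This also shows that the injection you steer towards, which would give $\nu(S)\leq \nu(\empt)-|\CC|$, proves too much: the constant $1/2$ in the lemma is tight. Your heuristic for where the factor $2$ comes from is likewise reversed: since the $\FF^S$-cores are $T$-disjoint (Lemma~\ref{l:R2}) and the relevant cores meet $T$, an old core lies in at most one residual core, so old cores never ``split''; the loss comes from a single residual core swallowing two (or more) cores of $\CC$, exactly as in the example above.

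The paper's proof is the corrected version of your counting, and it needs neither Lemma~\ref{l:R1}(i) nor any nested-chain argument. Each $\FF^S$-core contains at least one $\FF$-core, distinct $\FF^S$-cores contain disjoint collections of $\FF$-cores (by $T$-disjointness), and -- this is the observation your proposal is missing -- any $\FF^S$-core that contains a core from $\CC$ must contain at least two $\FF$-cores, since otherwise it would belong to $\FF(C)$ for that $C\in\CC$ and hence be covered by $S$. The bookkeeping is then immediate: a core $C\in\CC$ is either contained in no $\FF^S$-core, or in an $\FF^S$-core containing at least $\max\{2,m\}$ $\FF$-cores, where $m$ is the number of $\CC$-cores inside it; charging each residual core the number of $\FF$-cores it contains minus one and summing gives $\nu(\empt)-\nu(S)\geq |\CC|/2$. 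Deferring to ``mirroring \cite{N-p} verbatim'' does not close this step, which is precisely the delicate point you flagged but did not resolve.
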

\begin{proof}
The $\FF^S$-cores are $T$-disjoint, and each of them contains some $\FF$-core. 
Every $\FF^S$-core that contains a core from $\CC$ contains at least two $\FF$-cores. 
Thus the number of $\FF^S$-cores that contain exactly one $\FF$-core is at most $\nu(\empt)-|\CC|/2$.
Consequently, $\nu(S) \leq \nu(\empt)-|\CC|/2$.
\qed
\end{proof}

Consider an instance of the {\sc Min-Cost Set Function Edge Cover} problem with 
positively $T$-intersecting supermodular $f$ and $f$-quasi-bipartite $G$, 
and optimal solution value $\tau_f$.
Let $\FF=\{A \subs V:f(A)=\max(f)\}$, 
and for $I \subs E$ let $\nu_f(I)$ denote the number of $\FF^I$-cores. 
In the next section we will prove the following. 

\begin{lemma} \label{l:f}
There exists a polynomial time algorithm that 
finds $\empt \neq \CC \subs \CC_\FF$ and a cover $S \subs E$ of $\FF(\CC)$ such that 
$$
\f{c(S)}{|\CC|} \leq \f{2}{\max(f)} \cdot \f{\tau_f}{|\CC_\FF|}=\f{2}{\max(f)} \cdot \f{\tau_f}{\nu_f(\empt)} \ .
$$
\end{lemma}

Now let $I \subs E$ be an edge set such that $\nu_f(I) \geq 1$, and note that then $\max(f^I)=\max(f)$.
Applying Lemmas \ref{l:d} and \ref{l:f} on the residual function $g=f^I$ we get
that we can find in polynomial time an edge set $S \subs E \sem I$ such that 
$$
\f{c(S)}{\nu_g(\empt)-\nu_g(S)} \leq \f{c(S)}{|\CC|/2} \leq \f{4}{\max(g)} \cdot \f{\tau_g}{\nu_g(\empt)} \ .
$$
Observing that 
$\nu_g(\empt)=\nu(I)$, $\nu_g(S)=\nu_f(I \cup S)$, and $\tau_g \leq \tau_f$ we get:

\begin{corollary} \label{c:back'}
There exists a polynomial time algorithm that given $I \subs E$ with $\nu_f(I) \geq 1$
finds an edge set $S \subs E \sem I$ such that  
$$
\f{c(S)}{\nu_f(I)-\nu_f(I \cup S)} \leq \f{4}{\max(f)}\cdot \f{\tau_f}{\nu_f(I)} \ .
$$
\end{corollary}

From Corollary~\ref{c:back'} it is a routine to deduce the following corollary, 
c.f. \cite{KR} and \cite[Theorem~3.1]{N-nw}; we provide a proof for completeness of exposition.

\begin{corollary} \label{c:back}
There exists a polynomial time algorithm that computes a cover $I$ of $\FF=\{A \subs V:f(A)=\max(f)\}$ 
of cost $c(I) \leq \f{4}{\max(f)} \cdot  (1+\ln \nu_f(\empt)) \cdot \tau_f$. 
\end{corollary}
\begin{proof}
Start with $I=\empt$ an while $\nu_f(I) \geq 1$ add to $I$ an edge set $S$ as in Corollary~\ref{c:back'}.
Let $I_j$ be the partial solution at the end of iteration $j$, where $I_0 =\empt$, and let $S_j$ be the set
added at iteration $j$; thus $I_j = I_{j-1} \cup S_j$, $j = 1,\ldots,q$. 
Let $\nu_j = \nu_f(I_j)$, so $\nu_0 = \nu_f(\empt)$, $\nu_q=0$, and $\nu_{q-1} \geq 1$. 
Let $\rho=\f{4}{\max(f)}$. Then 
$$\f{c_j}{\nu_{j-1}-\nu_j} \leq \rho \cdot \f{\tau_f}{\nu_{j-1}} \ \ \ \ \ j=1, \ldots, q \ . $$ 
This implies $c_q \leq \rho \tau_f$
and 
$$\nu_j \leq \nu_{j-1} \left(1-\f{c_j}{\rho \tau_f}\right) \ \ \ \ \ j=1, \ldots,q \ .$$
Unraveling we get
$$
\f{\nu_{q-1}}{\nu_0} \leq \prod_{j=1}^{q-1}\left(1-\f{c_j}{\rho \tau_f} \right) \ .
$$
Taking natural logarithms and using the inequality $\ln(1 + x) \leq x$, we obtain
$$
\rho \cdot \tau_f \cdot \ln \left( \f{\nu_0}{\nu_{q-1}} \right) \geq \sum_{j=1}^{q-1} c_j \ .
$$
Since $c_q \leq \rho \tau_f$ and $\nu_{q-1} \geq 1$, we get
$c(I) \leq c_q +\sum_{j=1}^{q-1} c_j \leq \rho \tau_f(1+\ln \nu_0)$.
\qed
\end{proof}

To see that Corollary~\ref{c:back} implies Theorem~\ref{t:1}, consider the following algorithm
that uses the so called ``backward augmentation'' method. 

\medskip

\begin{algorithm}[H]
\caption{{\sc Backward-Augmentation$(f,G=(V,E),c)$}} 
\label{alg:AC} 
$I \gets \empt$ \\
\For{$\ell=\max(f)$ {\em downto} $1$}
{Compute a cover $I_\ell$ of $\FF_\ell=\{A \subs V:f^I(A)=\ell\}$ as in Corollary~\ref{c:back} \\
$I \gets I \cup I_\ell$}
\Return{$I$}
\end{algorithm}

\medskip

At iteration $\ell$ we have $c(I_\ell)/\tau_f  \leq  4(1+\ln |T|)/\ell$, hence
the overall approxi\-mation ratio is $4(1+\ln |T|) \cdot \sum_{\ell=\max(f)}^1 1/\ell=4H(\max(f)) \cdot (1+\ln |T|)$, 
as required in Theorem~\ref{t:1}. 
It remains only to prove Lemma~\ref{l:f}, which is done in the next section,
where we also describe a simple polynomial time implementation of our algorithm. 

\section{Proof of Lemma~\ref{l:f}}

Let $\langle G=(V,E),c,T,f \rangle$ be an instance of {\sc Min-Cost Set Function Edge Cover} with 
positively $T$-intersecting supermodular $f$ and $f$-quasi-bipartite $G$, 
and an optimal solution value $\tau=\tau_f$. Let us denote $p=\max(f)$ and let $\FF=\{A \subs V:f(A)=p\}$.
Recall that $\FF(C)$ denotes the family of sets in $\FF$ that contain no core distinct from $C$, 
and that $\FF(\CC)=\cup_{C \in \CC} \FF(C)$ for $\CC\subs \CC_\FF$.
We need to show that there exists a subfamily of cores $\CC \subs \CC_\FF$ and a cover $S \subs E$ of $\FF(\CC)$ such that 
\begin{equation} \label{e:S}
\f{c(S)}{|\CC|} \leq \f{1}{p} \cdot \f{\tau}{|\CC_\FF|} \ .
\end{equation}
We also need to design a polynomial time algorithm that finds such $\CC,S$. 

\subsection{\bf Roadmap of the proof} \label{ss:roadmap}

Here is a roadmap of the proof of Lemma~\ref{l:f}. 
To make this roadmap a complete proof we just need to  describe a polynomial time implementation
and to prove formally three Lemmas \ref{l:A}, \ref{l:B}, and \ref{l:D} mentioned in this roadmap; 
this is done in Sections \ref{ss:implementation} and \ref{ss:proofs}, respectively.

We say that $I \subs E$ is a {\bf $p$-cover} of $\FF$ if $d_I(A) \geq p$ for all $A \in \FF$, and 
$I$ is {\bf $\FF$-quasi-bipartite} if every edge in $I$ 
has an end (tail or head) $v$ such that $v \in T$ or such that $v$ does not belong to any set in $\FF$.
Fix an optimal solution $I \subs E$, so $I$ is a cover of $f$ of cost $c(I)=\tau$.
Note that $I$ is a $p$-cover of $\FF$ (since $f(A)=p$ for all $A \in \FF$) and that 
$I$ is $\FF$-quasi-bipartite (since $G$ is $f$-quasi-bipartite and since $I \subs E$).

\begin{figure} \centering 
\includegraphics{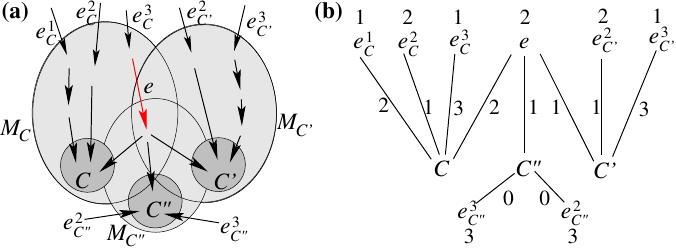}
\caption{
(a) A $3$-cover $I$ of $\FF(\{C,C',C''\})$; 
here $e_{C''}^2,e_{C''}^3$ have cost $3$ each, $e,e_C^2,e_{C'}^2$ have cost $2$ each, 
and all other edges have cost $1$.
(b) The auxiliary graph $\HH$.
The star  $S_\HH$ with center $e$ and leaf set $\CC=\{C,C',C''\}$ has ratio $\f{c(S_\HH)}{|\CC|}=\f{6}{3}=2$ 
(the same ratio $2$ is achieved by the star $S_\HH \sem \{C\}$).
The edge subset $S$ of $I$ that corresponds to $S_\HH$ is $I_C^3 \cup I_{C'}^1 \cup I_{C''}^1$.
Here $c(I)=26$ and $c(\HH)=29$. 
Note that $e_{C'}^1=e=e_{C''}^1$, and that $e \in I_C^3$ but $e \neq e_C^3$.}
\label{f:prime} \end{figure}

\begin{enumerate}
\item[{\bf (A)}]
For every $C \in \CC_\FF$ fix some inclusion minimal $p$-cover $I_C \subs I$ of $\FF(C)$.
In {\bf Lemma \ref{l:A}} we show the following:
\begin{itemize}
\item[(i)]
Each $I_C$ partitions into $p$ inclusion minimal $1$-covers $I_C^1,\ldots, I_C^p$ of $\FF(C)$.
\item[(ii)]
Each $\FF(C)$ has a unique inclusion maximal set $M_C$ 
and each $I_C^j$ has a unique edge $e_C^j$ that covers $M_C$, which we call the {\bf prime edge of $I_C^j$}.
\end{itemize}
\item[{\bf (B)}]
In {\bf Lemma~\ref{l:B}} we show that for distinct $C,C' \in \CC_\FF$ and any $1 \leq j,j' \leq p$,
if $I_C^j \cap I_{C'}^{j'} \neq \empt$ then $I_C^j \cap I_{C'}^{j'}=\{e_C^j\}$ or $I_C^j \cap I_{C'}^{j'}=\{e_{C'}^{j'}\}$,
see Fig.~\ref{f:prime}(a);
this property is since $I$ is $\FF$-quasi-bipartite.
Consequently, for every $e \in I$ there is at most one set $I_C^j$ such that $e \in I_C^j$ and $e \neq e_C^j$.  
\item[{\bf (C)}]
Construct an auxiliary bipartite graph $\HH$ with node- and edge-costs as follows, see Fig.~\ref{f:prime}(b)
The node parts of $\HH$ are the prime edges and $\CC_\FF$.
Each node $e$ of $\HH$ that is a prime edge inherits its cost $c(e)$ in $G$,
and is connected to each $C \in \CC_\FF$ such that $e \in I_C^j$ for some $j$ 
by an edge of cost $c(I_C^j)-c(e)$ (this edge represents the set $I_C^j$).
Since for every $e \in I$ at most one set $I_C^j$ contains $e$ as a non-prime edge,
and since the sets $I_C^j$ are pairwise disjoint,
the total cost of $\HH$ is at most $2$ times the cost of $I$. 
\item[{\bf (D)}]
Every node $C \in \CC_\FF$ of $\HH$ has at least $p$ neighbors in $\HH$ (the prime edges of the sets $I_C^1,\ldots, I_C^p$). 
In {\bf Lemma \ref{l:D}} we show that $\HH$ contains a star $S_\HH$ with leaf set $\CC \subs \CC_\FF$ such that 
$\f{c(S_\HH)}{|\CC|} \leq \f{1}{p} \cdot \f{c(\HH)}{|\CC_\FF|} \leq \f{2}{p} \cdot \f{\tau}{|\CC_\FF|}$.
Then the edge subset $S \subs I$ that corresponds to $S_\HH$ covers $\FF(\CC)$, 
and $S,\CC$ satisfy inequality (\ref{e:S}).
\item[{\bf (E)}]
To find $\empt \neq \CC \subs \CC_\FF$ and a cover $S$ of $\FF(\CC)$ 
that satisfies (\ref{e:S}), we make a similar construction: 
now  $\HH$ has node set $E \cup \CC$,
every node $e \in E$ of $\HH$ has cost equal to the cost of $e$ in $G$,
and in $\HH$ each node $C \in \CC_\FF$ is connected to each node $e \in E$ by an edge of cost
being the minimum cost of an edge set $S$ such that $S \cup \{e\}$ covers $\FF(C)$.
In such a graph $\HH$ we can find a star $S_\HH$ with leaf set $\CC$ that minimizes $\f{c(S_\HH)}{|\CC|}$
using the method of Klein \& Ravi \cite{KR}; see also step 3 of the implementation discussed in the next section. 
\end{enumerate}

\subsection{\bf Implementation} \label{ss:implementation}

Here we briefly discuss a simple implementation of the entire algorithm.
We start with the particular case of the 
{\sc Min-Cost Rooted Subset $(k_0,k)$-Edge-Connection Augmentation} problem.
In what follows let $n=|V|$ and $m=|E|$. 
As a pre-processing step, we assign unit capacities to edges in $E$ and compute a $k_0$-flow from the root $r$ to each $t \in T$.
This can be done in $O(km|T|)$ time using the Ford-Fulkerson algorithm.
Let us consider iteration $\ell$ of Algorithm~\ref{alg:AC}, when $\max(f)=k-\ell$.
We will assume that we already have a flow on zero cost edges of value $k-\ell-1$ to each $t \in T$,
and perform the following steps. 
\begin{enumerate}
\item
We increase the flow by $1$ to each $t \in T$, and discard terminals for 
which the flow can be further increased by $2$. This can be done in $O(m|T|)$ time. 
\item
To compute the cost of an edge of $\HH$ between nodes $C$ and $e$,
we add a ``dummy'' edge of cost $0$ from $r$ to some terminal in every core distinct from $C$, 
set the cost of $e$ to $0$, 
and compute a minimum cost edge set that increases the $rC$-flow by $1$; 
the later problem admits a linear time reduction to the shortest path problem and thus can be 
implemented in $O(n^2)$ time. 
The number of edges in $\HH$ is $O(m|T|)$, hence 
$\HH$ can be constructed in $O(n^2m|T|)$ time. 
\item
We can sort the edges of $\HH$ by increasing cost in $O(m|T| \log n)$ time. 
Then finding a (nontrivial) star $S^e$ in $\HH$ with a specific center $e$ that minimizes $\f{c(S^e)}{|\CC|}$
can be done in time linear in the degree of $e$ in $\HH$ as follows. 
We take the lowest cost edge incident to $e$ into $S^e$ and then add edges incident to $e$ one by one 
in increasing cost order until reaching a local minimum of  $\f{c(S^e)}{|\CC|}$; see \cite{KR}. 
The overall time for computing all stars $S^e$ is $O(m n \log n)$, 
which is dominated by the time $O(n^2m|T|)$ of the construction of $\HH$. 
\item
At iteration $\ell$ we need to construct the graph $\HH$
at most $|T|$ times, hence the overall time per iteration $\ell$ is $O(n^2 m |T|^2)$. 
And since we have $k-k_0$ iterations, the overall running time is $(k-k_0) \cdot O(n^2 m|T|^2)=O(kn^6)$.
\end{enumerate}
We note that while the running time of the described implementation is somewhat high, 
it is still much lower than that of Chan et al. \cite{CLWZ}.

The implementation of steps $1,3,4$ for the {\sc Min-Cost Set Function Edge Cover} problem under Assumption~1 is similar.
For step~2, for any $C \in \CC_\FF$ and $e \in E\sem I$ 
we need to find in polynomial time a min-cost edge set $S=S(e,C)$ such that $S \cup \{e\}$ covers $\FF(C)$.
For this, it is sufficient to find a min-cost cover of $\FF(C)$ after resetting the cost of $e$ to zero. 
The family $\FF(C)$ is a $T$-intersecting family that has a unique core; such a family is called a {\bf ring}. 
It is known that a min-cost edge-cover of a ring can be found in polynomial time 
under Assumption~1 (c.f. \cite{F-R,N-p}), by a standard primal dual algorithm.

\subsection{\bf Proofs of Lemmas} \label{ss:proofs}

Now we turn to formal proofs of Lemmas \ref{l:A},\ref{l:B} and \ref{l:D} mentioned in our roadmap. 
At each step we will specify the part of our roadmap that is proved.

A $T$-intersecting family $\RR$ that has a unique core $C$ is called a {\bf ring}.
Then $C$ is the intersection of all sets in $\RR$, 
and $\RR$ also has a unique inclusion maximal set $M$ which is the union of all sets in $\RR$. 
The following lemma is a folklore.

\begin{lemma} \label{l:M}
If $\FF$ is a $T$-intersecting family then $\FF(C)$ is a ring family for any $C \in \CC_\FF$;
thus $\FF(C)$ also has a unique  inclusion maximal set $M_C$. 
Furthermore, $M_C \cap M_{C'} \cap T=\empt$ for any distinct $C,C' \in \CC_\FF$.
\end{lemma}

The next lemma gives two additional known properties of rings;  c.f. 
\cite{F-ker} for the first property and \cite[Lemma 2.6 and Corollary 2.7]{N-p} for the second.
These two properties imply part {\bf (A)}.

\begin{lemma}  \label{l:A}
Let $\RR$ be a ring with minimal member $C$ and maximal member $M$.
\begin{itemize}
\item[{\em (i)}]
Any $p$-cover of $\RR$ is a union of $p$ edge disjoint covers of $\RR$. 
\item[{\em (ii)}]
Let $I$ be an inclusion minimal cover of $\RR$. 
Then there is an ordering $e_1,e_2,\ldots,e_q$ of $I$ and a nested family  
$C=C_1 \subset C_2 \cdots \subset C_q = M$ of sets in $\RR$ such that 
for every $j=1,\ldots, q$, $e_j$ is the unique edge in $I$ that enters $C_j$ 
(namely, $e_j$ has head in $C_j$ and tail not in $C_j$).
\end{itemize}
\end{lemma}

Lemmas  \ref{l:M} and \ref{l:A}(i) imply the following lemma that implies parts {\bf (B,C)}.

\begin{lemma} \label{l:B}
Let $I$ be an $\FF$-quasi-bipartite cover of a $T$-intersecting family~$\FF$.
For $C \in \CC_\FF$ let $I_C \subs I$ be an inclusion minimal cover of $\FF(C)$,
and let $e_C$ be the unique (by Lemma~\ref{l:A}{\em (i)}) edge in $I_C$ that covers $M_C$.
Let $C,C' \in \CC_\FF$ be distinct and let $e \in I_C \cap I_{C'}$. Then $e=\{e_C\}$ or $e=\{e_{C'}\}$. 
\end{lemma}
\begin{proof}
Suppose that $e \neq e_C$ and we will show that then $e=e_{C'}$. 
Note that $e$ does not cover $M_C$, hence $e$ has both ends in $M_C$, 
by the minimality of $I_C$ and Lemma~\ref{l:A}(ii). 
Since $I$ is $\FF$-quasi-bipartite, $e$ has an end $t$ in $M_C \cap T$. 
By Lemma~\ref{l:M}, $t \notin M_{C'}$, hence by the minimality of $I_{C'}$ we must have $e=e_{C'}$. 
\qed
\end{proof}

The next lemma implies part {\bf (D)}.

\begin{lemma} \label{l:D}
Let $H=(A \cup B,E)$ be a bipartite graph with 
edge- and node- costs $\{c(e):e \in E\} \cup \{c(a):a \in A\}$
and let ${\cal S}$ be the set of stars in $H$ with center in $A$ and leaves in $B$. 
If the degree of every $b \in B$ is at least $p$ then there is $S^* \in {\cal S}$ such that   
$\f{c(S^*)}{|L(S^*)|} \leq \f{1}{p} \cdot \f{c(G)}{|B|}$, where $L(S^*)$ is the set of leaves of $S^*$.  
\end{lemma}
\begin{proof}
For $S \in {\cal S}$ let $c_S$ denote the cost of $S$ and let ${\bf c}=\{c_S:S \in {\cal S}\}$
be a vector of costs of the stars. For an integer $q$ let ${\cal L}(q)$ be the following set of linear constraints:
\[
\begin{array}{ll}
\displaystyle{\sum_{L(S) \ni b} x_S \geq q}  & \ \ \ \forall  b \in B  \\
0 \leq x_S \leq 1                                         & \ \ \ \forall S \in {\cal S}
\end{array} 
\]
Note that the characteristic vector ${\bf x}$ of the inclusion maximal stars in ${\cal S}$
satisfies the set of constraints ${\cal L}(p)$ and that ${\bf c} \cdot {\bf x}=c(H)$. 
Thus the vector ${\bf y}={\bf x}/p$ satisfies ${\cal L}(1)$ and ${\bf c} \cdot {\bf y}=c(H)/p$.
Let $S^*=\arg\max_{S \in {\cal S}} \f{|L(S)|}{c(S)}$. Then 
$$
\f{|L(S^*)|}{c(S^*)} ({\bf c} \cdot {\bf y})    \geq 
\sum_{S \in {\cal S}} \f{|L(S)|}{c_S} c_S y_S=\sum_{S \in {\cal S}} |L(S)|y_S = \sum_{b \in B} \sum_{L(S) \ni b} y_S \geq \sum_{b \in B} 1 =|B| \ .
$$
The first inequality is by the choice of $S^*$ and the second inequality is since ${\bf y}$ 
satisfies ${\cal L}(1)$.

From this we get that $\f{|L(S^*)|}{c(S')} \geq \f{|B|}{{\bf c} \cdot {\bf y}}$, so 
$\f{c(S^*)}{|L(S^*)|} \leq \f{{\bf c} \cdot {\bf y}}{|B|}=\f{{\bf c} \cdot {\bf x}/p}{|B|}=\f{1}{p} \cdot \f{c(H)}{|B|}$. 
\qed
\end{proof}

\medskip 

This concludes the proof of Lemma~\ref{l:f}, and thus also the proofs
Theorem~\ref{t:1} and Corollary~\ref{c:r} are complete.


\end{document}